\newtheorem {Theorem}                 {Theorem}         [section]
\newtheorem {theorem}      [Theorem]  {Theorem}
\newtheorem {myalgorithm}    [Theorem]  {Algorithm}
\newtheorem {lemma}        [Theorem]  {Lemma}
\journal{arXiv}
\begin{document}
	\begin{frontmatter}
		\title{Minimum $2$-vertex strongly biconnected spanning directed subgraph problem}
		\author{Raed Jaberi}
		
		\begin{abstract}  
	    A directed graph $G=(V,E)$ is strongly biconnected if $G$ is strongly connected and its underlying graph is biconnected. A strongly biconnected directed graph $G=(V,E)$ is called $2$-vertex-strongly biconnected if $|V|\geq 3$ and the induced subgraph on $V\setminus\left\lbrace w\right\rbrace $ is strongly biconnected for every vertex $w\in V$. In this paper we study the following problem.		
		Given a $2$-vertex-strongly biconnected directed graph $G=(V,E)$, compute an edge subset $E^{2sb} \subseteq E$ of minimum size such that the subgraph $(V,E^{2sb})$ is $2$-vertex-strongly biconnected.
		\end{abstract} 
		\begin{keyword}
			Directed graphs \sep Approximation algorithms  \sep Graph algorithms \sep strongly connected graphs \sep Strongly biconnected directed graphs
		\end{keyword}
	\end{frontmatter}
	\section{Introduction}
	  A directed graph $G=(V,E)$ is strongly biconnected if $G$ is strongly connected and its underlying graph is biconnected. A strongly biconnected directed graph $G=(V,E)$ is called $k$-vertex-strongly biconnected if $|V|> k$ and for each $L\subset V$ with $|L|<k$, the induced subgraph on $V\setminus L$ is strongly biconnected. The minimum $k$-vertex-strongly biconnected spanning subgraph problem (denoted by MKVSBSS) is formulated as follows.		
	Given a $k$-vertex-strongly biconnected directed graph $G=(V,E)$, compute an edge subset $E^{ksb} \subseteq E$ of minimum size such that the subgraph $(V,E^{ksb})$ is $k$-vertex-strongly biconnected. In this paper we consider the MKVSBSS problem for $k=2$.
 Each $2$-vertex-strongly-biconnected directed graph is $2$-vertex-connected, but the converse is not necessarily true. Thus, optimal solutions for minimum $2$-vertex-connected spanning subgraph (M2VCSS) problem are not necessarily feasible solutions for the $2$-vertex strongly biconnnected spanning subgraph problem, as shown in Figure \ref{figure:sbiconnectedexampleoptimalsolutions}.

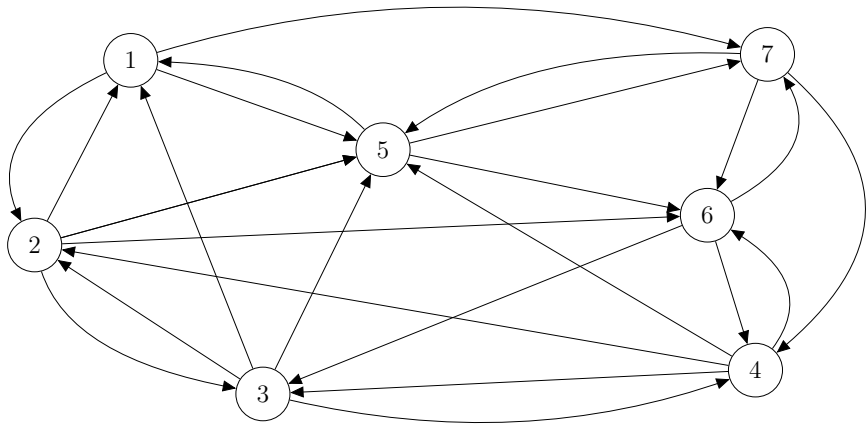
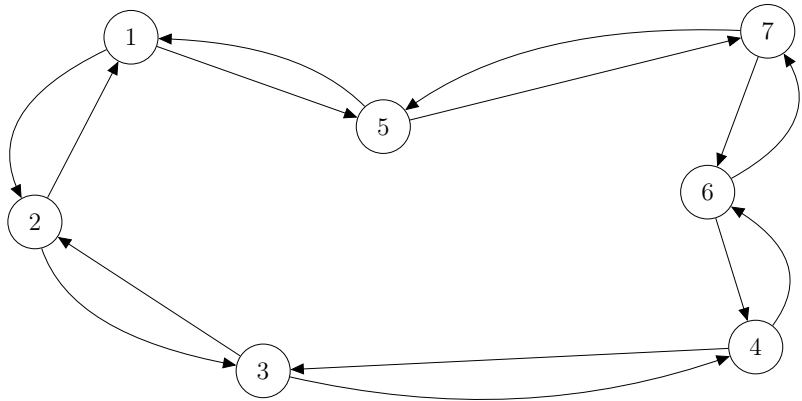
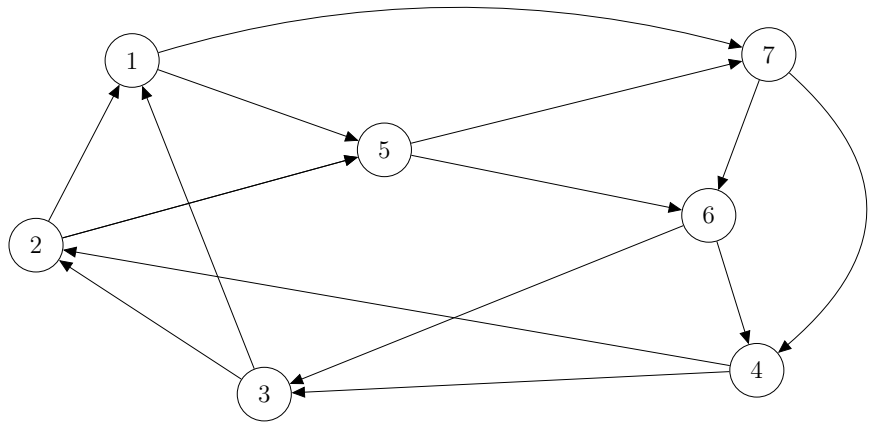
\begin{figure}[htp]
	\centering
	
	\subfigure[]{
	
\scalebox{0.79}{

		\begin{tikzpicture}[xscale=2]
		\tikzstyle{every node}=[color=black,draw,circle,minimum size=0.9cm]
		\node (v1) at (-1.6,3.1) {$1$};
		\node (v2) at (-2.4,0) {$2$};
		\node (v3) at (-0.5, -2.5) {$3$};
		\node (v4) at (3.6,-2.1) {$4$};
		\node (v5) at (0.5,1.6) {$5$};
		\node (v6) at (3.2,0.5) {$6$};
		\node (v7) at (3.7,3.2) {$7$};
	
		\begin{scope}   
		\tikzstyle{every node}=[auto=right]   
		\draw [-triangle 45] (v2) to (v5);
			\draw [-triangle 45] (v2) to (v6);
				\draw [-triangle 45] (v3) to (v5);
	\draw [-triangle 45] (v2) to (v1);
	\draw [-triangle 45] (v1) to[bend right] (v2);
	\draw [-triangle 45] (v1) to (v5);
	\draw [-triangle 45] (v5) to[bend right] (v1);
	\draw [-triangle 45] (v5) to (v7);
	\draw [-triangle 45] (v7) to[bend right] (v5);
	\draw [-triangle 45] (v7) to (v6);
	\draw [-triangle 45] (v6) to[bend right] (v7);
	\draw [-triangle 45] (v6) to (v4);
	\draw [-triangle 45] (v4) to[bend right] (v6);
	\draw [-triangle 45] (v4) to (v3);
	\draw [-triangle 45] (v3) to[bend right] (v4);
	\draw [-triangle 45] (v3) to (v2);
	\draw [-triangle 45] (v2) to[bend right] (v3);
	\draw [-triangle 45] (v1) to[bend left] (v7);
	\draw [-triangle 45] (v5) to (v6);
	\draw [-triangle 45] (v6) to (v3);
	\draw [-triangle 45] (v4) to (v2);
	\draw [-triangle 45] (v3) to (v1);
	\draw [-triangle 45] (v4) to (v5);
	\draw [-triangle 45] (v2) to (v5);
	\draw [-triangle 45] (v7) to [bend left](v4);
		\end{scope}
		\end{tikzpicture}
	}
	}
	\subfigure[]{
		\scalebox{0.79}{
		\begin{tikzpicture}[xscale=2]
		\tikzstyle{every node}=[color=black,draw,circle,minimum size=0.9cm]
	\node (v1) at (-1.6,3.1) {$1$};
\node (v2) at (-2.4,0) {$2$};
\node (v3) at (-0.5, -2.5) {$3$};
\node (v4) at (3.6,-2.1) {$4$};
\node (v5) at (0.5,1.6) {$5$};
\node (v6) at (3.2,0.5) {$6$};
\node (v7) at (3.7,3.2) {$7$};
		
		\begin{scope}   
		\tikzstyle{every node}=[auto=right]   
		
		\draw [-triangle 45] (v2) to (v1);
		\draw [-triangle 45] (v1) to[bend right] (v2);
		\draw [-triangle 45] (v1) to (v5);
		\draw [-triangle 45] (v5) to[bend right] (v1);
		\draw [-triangle 45] (v5) to (v7);
		\draw [-triangle 45] (v7) to[bend right] (v5);
		\draw [-triangle 45] (v7) to (v6);
		\draw [-triangle 45] (v6) to[bend right] (v7);
		\draw [-triangle 45] (v6) to (v4);
		\draw [-triangle 45] (v4) to[bend right] (v6);
		\draw [-triangle 45] (v4) to (v3);
		\draw [-triangle 45] (v3) to[bend right] (v4);
		\draw [-triangle 45] (v3) to (v2);
		\draw [-triangle 45] (v2) to[bend right] (v3);
	
		\end{scope}
		\end{tikzpicture}}}
\subfigure[]{
	\scalebox{0.79}{
	\begin{tikzpicture}[xscale=2]
	\tikzstyle{every node}=[color=black,draw,circle,minimum size=0.9cm]
	\node (v1) at (-1.6,3.1) {$1$};
\node (v2) at (-2.4,0) {$2$};
\node (v3) at (-0.5, -2.5) {$3$};
\node (v4) at (3.6,-2.1) {$4$};
\node (v5) at (0.5,1.6) {$5$};
\node (v6) at (3.2,0.5) {$6$};
\node (v7) at (3.7,3.2) {$7$};
	
	\begin{scope}   
	\tikzstyle{every node}=[auto=right]   
	\draw [-triangle 45] (v2) to (v5);
	\draw [-triangle 45] (v2) to (v1);

	\draw [-triangle 45] (v1) to (v5);
	
	\draw [-triangle 45] (v5) to (v7);
	
	\draw [-triangle 45] (v7) to (v6);

	\draw [-triangle 45] (v6) to (v4);

	\draw [-triangle 45] (v4) to (v3);

	\draw [-triangle 45] (v3) to (v2);
	
	\draw [-triangle 45] (v1) to[bend left] (v7);
	\draw [-triangle 45] (v5) to (v6);
	\draw [-triangle 45] (v6) to (v3);
	\draw [-triangle 45] (v4) to (v2);
	\draw [-triangle 45] (v3) to (v1);
	\draw [-triangle 45] (v2) to (v5);
	\draw [-triangle 45] (v7) to [bend left](v4);
	\end{scope}
		\end{tikzpicture}}}
\caption{(a) A $2$-vertex strongly biconnected graph. (b) An optimal solution for the minimum $2$-vertex-connected spanning subgraph problem. But note that this subgraph is not $2$-vertex strongly biconnected.(c) An optimal solution for the minimum $2$-vertex strongly biconnected spanning subgraph problem}
\label{figure:sbiconnectedexampleoptimalsolutions}
\end{figure}

The problem of finding a $k$-vertex-connected spanning subgraph of a $k$-vertex-connected directed graph is NP-hard for $k\geq 1$ \cite{G79}. Results of Edmonds \cite{Edmonds72} and Mader \cite{Mader85} imply that the number of edges in each minimal $k$-vertex-connected directed graph is at most $2kn$ \cite{CT00}. Cheriyan and Thurimella \cite{CT00} gave a $(1+1/k)$-approximation algorithm for the minimum $k$-vertex-connected spanning subgraph problem. Georgiadis \cite{Georgiadis11} improved the running time of this algorithm for the M2VCSS problem and presented a linear time approximation algorithm that achieves an approximation factor of $3$ for the M2VCSS problem. Georgiadis et al. \cite{GIK20} provided linear time $3$-and $2$- approximation algorithms based on the results of \cite{GT16,G10,ILS12,FILOS12} for the M2VCSS problem. Furthermore, Georgiadis et al. \cite{GIK20} improved the algorithm of Cheriyan and Thurimella when $k=2$. Strongly connected components of a directed graph and blocks of an undirected graphs can be found in linear time using Tarjan's algorithm \cite{TAARJAN72}.
 Wu and Grumbach \cite{WG2010} introduced the concept of strongly biconnected directed graph and strongly connected components.  The MKVSBSS problem is NP-hard for $k\geq 1$. In this paper we study the MKVSBSS problem when $k=2$ (denoted by M2VSBSS). 
	
\section{Approximation algorithm for the M2VSBSS problem} 
In this section we present an approximation algorithm (Algorithm \ref{algo:approximationalgorithmfor2sb}) for the M2VSBSS Problem. This algorithm is based on b-articulation points, minimal 2-vertex-connected subgraphs, and Lemma \ref{def:addingedgestoreducesbcs}. A vertex $w$ in a strongly biconnected directed graph $G$ is a b-articulation points if $G\backslash\left\lbrace w \right\rbrace $ is not strongly biconnected \cite{Jaberi20}.

\begin{lemma} \label{def:addingedgestoreducesbcs}
	Let $G_s=(V,E_s)$ be a subgraph of a strongly biconnected directed graph $G=(V,E)$ such that $G_s$ is strongly connected and $G_s$ has $t>0$ strongly biconnected components. Let $(u,w)$ be an edge in $E\setminus E_s$  such that $u,w $ are in not in the same strongly
	biconnected component of $G_s$. Then the directed subgraph $(V,E\cup \left\lbrace (u,w) \right\rbrace )$ contains at most $t-1$ strongly biconnected components.
\end{lemma}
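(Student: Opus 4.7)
The plan is to reduce the statement to a standard fact about blocks (maximal biconnected subgraphs) of the underlying undirected graph. The key preliminary is the correspondence between strongly biconnected components of a strongly connected directed graph $H$ and the blocks of its underlying undirected graph $H^u$: within each block $B$, the induced subgraph of $H$ is biconnected by the choice of $B$ and strongly connected because $H$ is, hence strongly biconnected; conversely, any strongly biconnected induced subgraph of $H$ is biconnected in $H^u$ and therefore lies within a single block.

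First, I would observe that $G_s' := (V, E_s \cup \{(u,w)\})$ remains strongly connected, since adding an edge to a strongly connected graph preserves strong connectivity. Consequently the correspondence above applies to both $G_s$ and $G_s'$, and it suffices to compare the number of blocks of $G_s^u$ with that of $(G_s')^u$.

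Second, I would invoke the classical undirected-graph fact about block-cut trees: if $u$ and $w$ lie in different blocks of a connected graph, then the block-cut tree contains a nontrivial path $B_1, c_1, B_2, \ldots, c_{r-1}, B_r$ with $r \geq 2$, where $u \in B_1$ and $w \in B_r$. Adding the undirected edge $\{u,w\}$ merges all the blocks $B_1, \ldots, B_r$ on this path into a single new block, leaves every other block unchanged, and hence decreases the total block count by exactly $r - 1 \geq 1$. Since the hypothesis places $u$ and $w$ in distinct strongly biconnected components of $G_s$, they also lie in distinct blocks of $G_s^u$, and translating back through the correspondence yields that $G_s'$ has at most $t-1$ strongly biconnected components.

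The main obstacle is the preliminary correspondence between strongly biconnected components and blocks; once it is justified (by a short direct argument, or by appealing to Wu and Grumbach \cite{WG2010}), the rest of the proof is a routine invocation of the block-cut tree structure. A secondary subtlety is the case in which $u$ or $w$ is an articulation point of $G_s^u$ and therefore belongs to several blocks at once; here one must verify that the hypothesis ``$u$ and $w$ are not in the same strongly biconnected component'' still forces the existence of a nontrivial path in the block-cut tree, which it does, for otherwise some single block would contain both vertices and induce a common strongly biconnected component.
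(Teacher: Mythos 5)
Your proof is correct, but it takes a genuinely different route from the paper's. The paper argues directly in the digraph: since $G_s$ is strongly connected there is a simple path $p$ from $w$ to $u$, and $p$ together with the edge $(u,w)$ forms a simple cycle, so $u$ and $w$ lie in a common strongly biconnected component of $(V,E_s\cup\{(u,w)\})$; the fact that this merge brings the number of components from $t$ down to at most $t-1$ is left implicit. You instead pass to the underlying undirected graph, identify the strongly biconnected components of a strongly connected digraph with the blocks, and invoke the block--cut tree: adding $\{u,w\}$ fuses the $r\ge 2$ blocks on the tree path between $u$ and $w$ into one and leaves all other blocks untouched, so the count drops by exactly $r-1\ge 1$. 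What your version buys is precisely the counting step the paper glosses over --- showing that $u$ and $w$ end up in one component is not by itself the conclusion; one also needs that adding an edge never splits components and that at least two old components coalesce, which is exactly what the block--cut tree argument certifies (and it also handles the case where $u$ or $w$ is a cut vertex lying in several blocks, which you rightly flag). What the paper's version buys is brevity and staying entirely in the directed setting. One small caution on your preliminary correspondence: saying the subgraph induced by a block is strongly connected ``because $H$ is'' is not a justification, since induced subgraphs of strongly connected digraphs are in general not strongly connected; for blocks it does hold, because a directed path that leaves a block must re-enter it through the same cut vertex and can be shortcut, or one can cite Wu and Grumbach as you propose --- so the obstacle you identify as the main one genuinely requires that short argument, and with it your proof is complete.
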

\begin{proof}
	Since $G_s$ is strongly connected, there exists a simple path $p$ from $w$ to $u$ in $G_s$. Path $p$ and edge $(u,w)$ form a simple cycle. Consequently, the vertices $u,w$ are in the same strongly biconnected component of the subgraph $(V,E\cup \left\lbrace (u,w) \right\rbrace )$.
\end{proof}

\begin{figure}[htbp]
	\begin{myalgorithm}\label{algo:approximationalgorithmfor2sb}\rm\quad\\[-5ex]
		\begin{tabbing}
			\quad\quad\=\quad\=\quad\=\quad\=\quad\=\quad\=\quad\=\quad\=\quad\=\kill
			\textbf{Input:} A $2$-vertex strongly biconnected directed graph $G=(V,E)$ \\
			
			\textbf{Output:} a $2$-vertex strongly biconnected subgraph $G_{2s}=(V,E_{2s})$\\
			{\small 1}\> find a minimal $2$-vertex-connected subgraph $G_{1}=(V,E_{1})$ of $G$.\\
			{\small 2}\> \textbf{if} $G_{1}$ is $2$-vertex strongly biconnected \textbf{then} \\
			{\small 3}\>\>output  $G_{1}$\\
			{\small 4}\> \textbf{else}\\
			{\small 5}\>\>$E_{2s} \leftarrow E_{1}$\\
			{\small 6}\>\>  $G_{2s}\leftarrow (V,E_{2s})$ \\
			{\small 7}\>\>  identify the b-articulation points of $G_{1}$.\\
			{\small 8}\>\>  \textbf{for} evry b-articulation point $b\in V$ \textbf{do} \\
			{\small 9}\>\>\>  \textbf{while} $G_{2s}\setminus\left\lbrace b \right\rbrace $ is not  strongly biconnected \textbf{do}\\ 
		{\small 10}\>\>\>\>  calculate the  strongly biconnected components of $G_{2s}\setminus\left\lbrace b \right\rbrace $\\
		{\small 11}\>\>\>\> find an edge $(u,w) \in E\setminus E_{2s}$ such that $u,w $ are not in \\
		{\small 12}\>\>\>\>the same strongly biconnected components of $G_{2s}\setminus\left\lbrace b \right\rbrace $.\\
		{\small 13}\>\>\>\> $E_{2s} \leftarrow E_{2s} \cup\left\lbrace  (u,w)\right\rbrace $ \\
		{\small 14}\>\>\>output $G_{2s}$
	
		\end{tabbing}
	\end{myalgorithm}
\end{figure}

\begin{lemma} 
Algorithm \ref{algo:approximationalgorithmfor2sb} returns a $2$-vertex strongly biconnected directed subgraph.
\end{lemma}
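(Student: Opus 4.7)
The theorem demands three properties of $G_{2s}$: that $|V|\geq 3$, that $G_{2s}$ itself is strongly biconnected, and that $G_{2s}\setminus\{w\}$ is strongly biconnected for every $w\in V$. The first is inherited from the input $G$. For the second, I would first observe that the minimal $2$-vertex-connected subgraph $G_1$ computed in line~1 is already strongly biconnected: $2$-vertex-connectedness of $G_1$ means $G_1$ is strongly connected and that $G_1\setminus\{v\}$ is strongly connected for every $v\in V$, so the underlying graph of $G_1$ has no cut-vertex and is therefore biconnected. Since the algorithm maintains $E_1\subseteq E_{2s}\subseteq E$ and adding edges preserves both strong connectivity and biconnectivity of the underlying graph, $G_{2s}$ remains strongly biconnected throughout.

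For the per-vertex condition, I would split according to whether $w$ is a b-articulation point of $G_1$. If it is not, then $G_1\setminus\{w\}$ is already strongly biconnected, and since $G_{2s}\setminus\{w\}\supseteq G_1\setminus\{w\}$, the same holds for $G_{2s}\setminus\{w\}$. If $w$ is a b-articulation point of $G_1$, it will be processed as $b=w$ by the outer \textbf{for} loop, and I need to argue that once its \textbf{while} loop exits, $G_{2s}\setminus\{w\}$ is strongly biconnected; subsequent iterations of the outer loop only add edges, so this property then persists.

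The heart of the proof is thus the correctness of the \textbf{while} loop for a given b-articulation point $b$. Its exit condition is exactly that $G_{2s}\setminus\{b\}$ is strongly biconnected, so I only need termination together with the existence, at each iteration, of the edge required in lines~11--12. Termination is immediate from Lemma~\ref{def:addingedgestoreducesbcs}, which guarantees that each added edge strictly decreases the number of strongly biconnected components of $G_{2s}\setminus\{b\}$. Applying that lemma requires $G_{2s}\setminus\{b\}$ to be strongly connected, which holds because $G_1\setminus\{b\}$ is strongly connected (by $2$-vertex-connectedness of $G_1$) and $G_{2s}\setminus\{b\}$ is a supergraph of $G_1\setminus\{b\}$.

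The main obstacle is showing that the required edge $(u,v)\in E\setminus E_{2s}$, with $u,v\neq b$ lying in distinct strongly biconnected components of $G_{2s}\setminus\{b\}$, always exists when the loop condition holds. For this I would invoke the input assumption that $G$ is $2$-vertex strongly biconnected, so $G\setminus\{b\}$ is strongly biconnected and therefore has $V\setminus\{b\}$ as its unique strongly biconnected component. If, to the contrary, every edge of $E\setminus E_{2s}$ avoiding $b$ had both endpoints inside one strongly biconnected component of $G_{2s}\setminus\{b\}$, then $G\setminus\{b\}$ would be obtained from $G_{2s}\setminus\{b\}$ by adding only ``intra-component'' edges, which cannot merge two previously distinct strongly biconnected components; hence $G\setminus\{b\}$ would still contain $t\geq 2$ strongly biconnected components, contradicting the fact that it has only one. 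This contradiction supplies the edge needed at every iteration and completes the correctness proof.
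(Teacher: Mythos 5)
Your proof is correct and is in fact substantially more complete than the paper's own argument, which consists of the single sentence that the claim ``follows from Lemma~\ref{def:addingedgestoreducesbcs}.'' Both rest on the same engine: the edge-addition lemma gives strict progress in the \textbf{while} loop, whose exit condition is exactly that $G_{2s}\setminus\{b\}$ is strongly biconnected. What you add, and the paper leaves entirely implicit, are the surrounding pieces: that $G_1$, being $2$-vertex-connected, is already strongly biconnected and remains so under edge additions; that vertices $w$ which are not b-articulation points of $G_1$ need no treatment since $G_{2s}\setminus\{w\}$ is an edge-supergraph of the strongly biconnected $G_1\setminus\{w\}$; that $G_{2s}\setminus\{b\}$ is strongly connected, so Lemma~\ref{def:addingedgestoreducesbcs} is applicable; and, most importantly, that the edge demanded in lines 11--12 exists at every iteration, a point the paper never addresses at all. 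The one place where you assert rather than argue is the claim that adding only ``intra-component'' edges cannot merge two distinct strongly biconnected components of $G_{2s}\setminus\{b\}$. The claim is true, but it merits a line of justification: for a strongly connected digraph the strongly biconnected components coincide with the blocks of its underlying graph (any strongly biconnected induced subgraph lies inside one block, and each block induces a strongly connected, hence strongly biconnected, subgraph, since any directed walk between two block vertices can have its excursions through a cut vertex shortcut), and adding an underlying edge with both endpoints in one block leaves the block partition unchanged. With that sentence inserted, your contradiction argument for the existence of the required edge is airtight, and your proof is a genuinely more rigorous version of the paper's.
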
	
\begin{proof}
It follows from Lemma \ref{def:addingedgestoreducesbcs}.
\end{proof}

The following lemma shows that each optimal solution for the M22VSBSS problem has at least $2n$ edges.
\begin{lemma} \label{def:optsolutionatlesttwon}
	Let $G=(V,E)$ be a $2$-vertex-strongly biconnected directed graph. Let $O\subseteq E$ be an optimal solution for the M2VSBSS problem. Then $|O|\geq 2n$. 
	
\end{lemma}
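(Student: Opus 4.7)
The plan is to show that every vertex of $(V,O)$ must have in-degree at least $2$; summing the in-degrees over all vertices will then immediately yield $|O|\geq 2n$, since $\sum_{v\in V}\mathrm{indeg}(v)=|O|$ in any directed graph.

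First I would invoke the defining property of a $2$-vertex-strongly biconnected graph: for every vertex $w\in V$, the induced subgraph of $(V,O)$ on $V\setminus\{w\}$ is strongly biconnected and, in particular, strongly connected. I would then argue by contradiction that no vertex $v\in V$ can have in-degree $1$ in $(V,O)$. Suppose that $u$ is the unique in-neighbor of $v$. Removing $u$ from $(V,O)$ leaves $v$ with no incoming edges, so no other vertex can reach $v$ in $(V,O)\setminus\{u\}$, contradicting strong connectivity of that subgraph. A symmetric argument forces $\mathrm{outdeg}(v)\geq 2$ as well, although the in-degree bound alone already suffices.

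Combining these observations, every vertex $v\in V$ satisfies $\mathrm{indeg}(v)\geq 2$ in $(V,O)$, hence
\[
|O|\;=\;\sum_{v\in V}\mathrm{indeg}(v)\;\geq\;2n,
\]
which is the desired inequality. I do not expect any real obstacle in this proof: the argument is purely a degree lower bound, and it uses only the $2$-vertex-strong-connectivity aspect of the hypothesis, not the biconnectivity of the underlying graph.
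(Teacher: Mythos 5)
Your proof is correct, and it is close in spirit to the paper's but more self-contained. The paper argues indirectly: since deleting any vertex from $(V,O)$ leaves a strongly biconnected (hence strongly connected) subgraph, $(V,O)$ has no strong articulation points and is therefore $2$-vertex-connected; the bound $|O|\geq 2n$ is then left implicit, resting on the standard fact (cited elsewhere in the paper to Cheriyan and Thurimella) that every vertex of a $2$-vertex-connected digraph has out-degree at least $2$. You instead prove the degree bound directly: if some vertex $v$ had a unique in-neighbor $u$, then in the induced subgraph on $V\setminus\{u\}$ no vertex could reach $v$, contradicting strong connectivity of that subgraph, so every in-degree is at least $2$ and summing gives $|O|\geq\sum_{v\in V}\mathrm{indeg}(v)\geq 2n$. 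This is essentially the proof of the min-degree fact the paper invokes, so the underlying reason is the same; what your version buys is that it actually completes the edge-counting step the paper omits, without appealing to the notion of strong articulation points or to external results. One small point worth making explicit: your contradiction needs at least one vertex other than $v$ to survive the deletion of $u$, i.e.\ $n\geq 3$, which is guaranteed because the definition of $2$-vertex strong biconnectivity requires $|V|\geq 3$ (and, implicitly, that no relevant incoming edge is a self-loop, which strong connectivity of $(V,O)$ already rules out as $v$'s only in-edge).
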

\begin{proof}
	for any vertex $x\in V$, the removal of $x$ from the subgraph $(V,O)$ leaves a strongly biconnected directed subgraph. Since each strongly biconnected directed graph is stronly connected,
	the subgraph $(V,O)$ has no strong articulation points. Therefore, the directed subgraph $(V,O)$ is $2$-vertex-connected.
\end{proof}
Let $l$ be the number of b-articulation points in  $G_{1}$. The following lemma shows that Algorithm \ref{algo:approximationalgorithmfor2sb} has an approximation factor of $(2+l/2)$.
\begin{theorem}
	Let $l$ be the number of b-articulation points in  $G_{1}$. Then, $|E_{2s}|\leq l(n-1)+4n$.
\end{theorem}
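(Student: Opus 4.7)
The plan is to split $E_{2s}$ into two disjoint parts, namely the edge set $E_1$ of the initial minimal $2$-vertex-connected subgraph produced in line~1 and the edges added during the for-loop of lines~8--13, and to bound each part separately. The target bound $l(n-1)+4n$ then splits naturally: $4n$ for $|E_1|$ and $l(n-1)$ for the additions.

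First, I would invoke the Edmonds--Mader bound already cited in the introduction: every minimal $k$-vertex-connected directed graph has at most $2kn$ edges, which for $k=2$ gives $|E_1|\leq 4n$. Next, I would fix a b-articulation point $b$ of $G_1$ and bound the number of iterations of the inner while loop. The crucial observation is that throughout the algorithm we have $E_1\subseteq E_{2s}$, and since $G_1$ is $2$-vertex-connected, the subgraph $G_{2s}\setminus\{b\}$ is strongly connected at every step. This is exactly the hypothesis of Lemma~\ref{def:addingedgestoreducesbcs}, so each edge $(u,w)$ chosen in lines~11--13 strictly decreases the number of strongly biconnected components of $G_{2s}\setminus\{b\}$. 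Since the strongly biconnected components partition the $n-1$ vertices of $V\setminus\{b\}$, there are at most $n-1$ of them initially and at least $1$ when the loop exits, so at most $n-2\leq n-1$ edges are added for this particular b-articulation point.

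Summing over the $l$ b-articulation points then gives at most $l(n-1)$ edges added by the for-loop, and combining with $|E_1|\leq 4n$ yields the claimed bound $|E_{2s}|\leq l(n-1)+4n$. The only point that requires care, rather than being a true obstacle, is verifying that the precondition of Lemma~\ref{def:addingedgestoreducesbcs} continues to hold as the algorithm proceeds; this is immediate from the monotonicity $E_{2s}\supseteq E_1$ and the $2$-vertex-connectivity of $G_1$, so no additional structural argument is needed.
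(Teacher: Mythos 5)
Your proof is correct and follows essentially the same route as the paper: bound $|E_1|\leq 4n$ via the Edmonds--Mader results and charge at most $n-1$ added edges to each of the $l$ b-articulation points, using Lemma~\ref{def:addingedgestoreducesbcs} to show each added edge strictly decreases the number of strongly biconnected components of $G_{2s}\setminus\{b\}$. You actually supply more detail than the paper (the verification that $G_{2s}\setminus\{b\}$ stays strongly connected, and the counting of components); the only minor slip is calling the strongly biconnected components a partition of $V\setminus\{b\}$ --- like blocks, they may share vertices --- but the bound of at most $n-1$ components, and hence the theorem, is unaffected.
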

\begin{proof}
	Results of Edmonds \cite{Edmonds72} and Mader \cite{Mader85} imply that $|E_1|\leq 4n$  \cite{CT00,Georgiadis11}. Moreover, by Lemma \ref{def:optsolutionatlesttwon}, every optimal solution for the M22VSBSS problem has size at least $2n$. For every b-articulation point in line $8$,  Algorithm \ref{algo:approximationalgorithmfor2sb} adds at most $n-1$ edge to $E_{2s}$ in while loop. Therefore, $|E_{2s}|\leq l(n-1)+4n$
\end{proof}
\begin{Theorem}
	The running time of Algorithm \ref{algo:approximationalgorithmfor2sb} is $O(n^{2}m)$.
\end{Theorem}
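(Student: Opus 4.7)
The plan is to bound the cost of each phase of Algorithm \ref{algo:approximationalgorithmfor2sb} separately and show that the main loop dominates. The initialization in line~1 produces a minimal $2$-vertex-connected spanning subgraph; using a Cheriyan--Thurimella style procedure (or one of the linear-time variants referenced in the introduction), this costs at most $O(nm)$. Identifying the b-articulation points in line~7 can be carried out by probing each candidate vertex and testing strong biconnectivity of the resulting subgraph in $O(n+m)$ time, giving a total of $O(n(n+m))=O(nm)$. Both of these are subsumed by the cost of the nested loops in lines~8--13.

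For the nested loops, I would first bound the number of b-articulation points by $l\le n$, so the outer \textbf{for} loop runs at most $n$ times. Each pass through the inner \textbf{while} loop performs two tasks: (i) it computes the strongly biconnected components of $G_{2s}\setminus\{b\}$, which can be obtained in $O(n+m)$ time by intersecting the strongly connected components of the directed graph (Tarjan's SCC algorithm) with the biconnected components of the underlying undirected graph (Tarjan's biconnected-components algorithm); and (ii) it scans $E\setminus E_{2s}$ in $O(m)$ time to find an edge whose endpoints lie in different strongly biconnected components. Thus each while-iteration costs $O(n+m)=O(m)$.

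The main obstacle is bounding the number of while-iterations per b-articulation point, and this is exactly where Lemma~\ref{def:addingedgestoreducesbcs} enters. To apply the lemma I would first verify the invariant that $G_{2s}\setminus\{b\}$ remains strongly connected throughout; this holds because $G_{1}$ is $2$-vertex-connected (so $G_{1}\setminus\{b\}$ is strongly connected) and adding edges to $G_{2s}$ preserves strong connectivity after removing $b$. Under this invariant Lemma~\ref{def:addingedgestoreducesbcs} guarantees that each added edge strictly decreases the number of strongly biconnected components of $G_{2s}\setminus\{b\}$, and since there are initially at most $n-1$ such components, the while loop terminates after at most $n-1$ iterations. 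Combining the three bounds, the main loop contributes $O(l\cdot n\cdot m)=O(n^{2}m)$, which dominates the initialization and b-articulation-point computation, yielding the claimed $O(n^{2}m)$ bound.
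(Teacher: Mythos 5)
Your proof is correct and takes essentially the same route as the paper: bound line~1 and the b-articulation-point computation separately, use Lemma~\ref{def:addingedgestoreducesbcs} to cap the while loop at $O(n)$ iterations of linear-time work each, and multiply by the at most $n$ b-articulation points to get $O(n^{2}m)$. Your explicit check that $G_{2s}\setminus\{b\}$ stays strongly connected (so the lemma applies) is a welcome detail the paper omits; the only minor discrepancy is that the paper cites an $O(n^{2})$ bound for computing the minimal $2$-vertex-connected subgraph rather than your $O(nm)$ claim, which is immaterial since that phase is dominated in either case.
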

\begin{proof}
	A minimal $2$-vertex-connected subgraph can be found in time $O(n^2)$ \cite{Georgiadis11,GIK20}. B-articulation points can be computed in $O(nm)$ time.
	The strongly biconnected components of a directed graph can be identified in linear time \cite{WG2010}. 
	Furthermore, by Lemma \ref{def:addingedgestoreducesbcs}, lines $9$--$13$ take $O(nm)$ time.
\end{proof}

\section{Open Problems}
 Results of Mader \cite{Mader71,Mader72} imply that the number of edges in each minimal $k$-vertex-connected undirected graph is at most $kn$ \cite{CT00}. Results of Edmonds \cite{Edmonds72} and Mader \cite{Mader85} imply that the number of edges in each minimal $k$-vertex-connected directed graph is at most $2kn$ \cite{CT00}. These results imply a $2$-approximation algorithm \cite{CT00} for minimum $k$-vertex-connected spanning subgraph problem for undirected and directed graphs \cite{CT00} because every vertex in a $k$-vertex-connected undirected graphs has degree at least $k$ and every vertex in a $k$-vertex-connected directed graph has outdegree at least $k$ \cite{CT00}. Note that these results imply a $7/2$ approximation algorithm for the M2VSBSS problem by calculating a minimal $2$-vertex-connected directed subgraph of a $2$-vertex strongly biconnected directed graph $G=(V,E)$ and a minimal $3$-vertex connected undirected subgraph of the underlying graph of $G$.
 
 \begin{lemma}
Let $G=(V,E)$ be a $2$-vertex strongly biconnected directed graph. Let $G_1=(V,L)$ be a minimal $2$-vertex-connected subgraph of $G$ and let $G_2=(V,U)$ be a minimal $3$-vertex-connected subgraph of the underlying graph of $G$. Then the directed subgraph $G_s=(V,L\cup A )$ is $2$-vertex strongly connected, where $A=\left\lbrace (v,w) \in E \text{ and } (v,w) \in U \right\rbrace $. Moreover, $|L\cup A|\leq 7n$
 \end{lemma}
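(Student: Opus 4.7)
The plan is to verify the two claims separately: first that $G_s$ is $2$-vertex strongly biconnected, and then that $|L \cup A| \leq 7n$.

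For the structural part, I would fix an arbitrary $w \in V$ and show that $G_s \setminus \{w\}$ is strongly biconnected, i.e.\ strongly connected with biconnected underlying graph. Strong connectivity comes from the first summand: since $G_1 = (V, L)$ is $2$-vertex-connected as a directed graph, $G_1 \setminus \{w\}$ is already strongly connected, and $G_s \setminus \{w\}$ contains $G_1 \setminus \{w\}$ as a spanning subgraph. Biconnectivity of the underlying graph comes from the second summand: for every undirected edge $\{u,v\} \in U$, at least one of $(u,v),(v,u)$ lies in $E$ and hence contributes a directed edge to $A$, so the underlying graph of $(V,A)$ contains $G_2 = (V,U)$. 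Since $G_2$ is $3$-vertex-connected, $G_2 \setminus \{w\}$ is $2$-vertex-connected, hence biconnected, and the underlying graph of $G_s \setminus \{w\}$ contains this biconnected graph on the vertex set $V \setminus \{w\}$.

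For the size bound I would invoke the edge-count results already cited in the paper. The Edmonds--Mader bound gives that a minimal $2$-vertex-connected directed graph has at most $4n$ edges, so $|L| \leq 4n$. Mader's bound for minimally $k$-vertex-connected undirected graphs, applied with $k = 3$, gives $|U| \leq 3n$. Reading $A$ so that each edge of $U$ is oriented exactly once (picking any direction that exists in $E$), we have $|A| \leq |U| \leq 3n$, so $|L \cup A| \leq |L| + |A| \leq 7n$, which matches the $7/2$-approximation ratio announced just before the lemma.

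The main subtlety I anticipate is the careful handling of $A$ when an undirected edge of $U$ corresponds to two antiparallel edges of $E$: if both copies were placed in $A$, the counting step would only give $|A| \leq 2|U| \leq 6n$ and the total bound would degrade to $10n$. With the one-orientation-per-edge reading, the rest of the argument is just the combination of $2$-vertex-connectivity of $G_1$ (used for strong connectivity after vertex deletion) with $3$-vertex-connectivity of $G_2$ (used for biconnectivity of the underlying graph after vertex deletion), and no further work is needed.
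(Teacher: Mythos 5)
Your proof is correct and follows essentially the same route as the paper's: strong connectivity of $G_s\setminus\left\lbrace w\right\rbrace$ is derived from the $2$-vertex-connectivity of $G_1=(V,L)$, biconnectivity of the underlying graph after deleting $w$ from the $3$-vertex-connectivity of $G_2=(V,U)$, and the size bound from the Edmonds--Mader estimate $|L|\leq 4n$ together with Mader's bound $|U|\leq 3n$. Your explicit remark that $A$ should contribute at most one directed edge per undirected edge of $U$ (so that $|A|\leq|U|$ rather than $2|U|$) is a point of care that the paper's proof leaves implicit, but it does not change the argument.
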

\begin{proof}
	Let $w$ be any vertex of the subgraph $G_s$. Since the $G_1=(V,L)$ is $2$-vertex-connected, subgraph $G_s$ has no strong articulation points. Therefore, $G_s\setminus\left\lbrace w\right\rbrace $ is strongly connected. Moreover, the underlying graph of  $G_s\setminus\left\lbrace w\right\rbrace $ is biconnected because the underlying graph of $G_s$ is $3$-vertex-connected. Results of Edmonds \cite{Edmonds72} and Mader \cite{Mader85} imply that $|L|<4n$. Results of Mader \cite{Mader71,Mader72} imply that $|U|\leq 3n$.
\end{proof}
 
 An open problem is whether each minimal $2$-vertex strongly biconnected directed graph has at most $4n$ edges.

 Cheriyan and Thurimella \cite{CT00} presented a $(1+1/k)$-approximation algorithm for the minimum $k$-vertex-connected spanning subgraph problem for directed and undirected graphs. The algorithm of Cheriyan and Thurimella \cite{CT00} has an approximation factor of $3/2$ for the minimum $2$-vertex-connected directed subgraph problem. Let $G=(V,E)$ be a $2$-vertex strongly biconnected directed graph and let $E^{CT}$ be the output of the algorithm of Cheriyan and Thurimella \cite{CT00}. The directed subgraph $(V,E^{CT})$ is not necessarily $2$-vertex strongly biconnected. But a $2$-vertex strongly biconnected subgraph can be obtained by performing the following third phase. For each edge $e\in E\setminus E^{CT}$, if the underlying graph of $G\setminus \left\lbrace e\right\rbrace $ is $3$-vertex-connected, delete $e$ from $G$. We leave as open problem whether this algorithm has an approximation factor of $3/2$ for the M2VSBSS problem.

\end{document}